\documentclass[11pt]{article}

\usepackage[top=2cm, bottom=2cm, left=2cm, right=2cm, includefoot]{geometry}
\usepackage{authblk}
\usepackage{amsthm}
\usepackage{xcolor}
\usepackage{graphicx}
\usepackage{amsmath}
\usepackage[ruled,linesnumbered,algo2e,vlined]{algorithm2e}
\usepackage{amssymb}
\usepackage{color}
\usepackage{dcolumn}
\usepackage{tabularx}
\usepackage{url}
\usepackage[numbers,sort]{natbib}

\newtheorem{theorem}{Theorem}
\newtheorem{fact}{Fact}
\theoremstyle{definition}
\newtheorem{example}{Example}

\newcommand{\distq}{DIST$q$}
\newcommand{\ldistq}{LDIST$q$}

\newcommand{\kmpshift}{\mathit{KMP\_Shift}}

\newcommand{\border}{\mathit{Bord}}
\newcommand{\sborder}{\mathit{Strong\_Bord}}
\newcommand{\PreKMP}{\mathtt{PreKMPShift}}

\newcommand{\PreShift}{\mathtt{PreHqShift}}
\newcommand{\PreDarray}{\mathtt{PreDistArray}}

\newcommand{\darray}{\mathit{dist}}
\newcommand{\shift}{\mathit{HQ\_Shift}}

\newcommand{\moji}[1]{\texttt{#1}}
\newcommand{\batu}{\textcolor{red}{\times}}
\newcommand{\maru}{\textcolor{blue}{\circ}}
\newcommand{\pmaru}{\textcolor{blue}{\bullet}}

\newcommand{\Fib}{\mathit{Fib}}

\makeatletter
\newcolumntype{B}{>{\boldmath\DC@{.}{.}{-1}}c<{\DC@end}}
\makeatother

\def\HiLi{\leavevmode\rlap{\hbox to \hsize{\color{yellow!50}\leaders\hrule height .8\baselineskip depth .5ex\hfill}}}

\begin{document}

\title{Fast and linear-time string matching algorithms based on the distances of $q$-gram occurrences}
\author[]{Satoshi Kobayashi}
\author[]{Diptarama Hendrian}
\author[]{Ryo Yoshinaka}
\author[]{Ayumi Shinohara}

\affil{Graduate School of Information Sciences, Tohoku University, Japan}

\date{}

\maketitle            

\begin{abstract}
Given a text $T$ of length $n$ and a pattern $P$ of length $m$, the string matching problem is a task to find all occurrences of $P$ in $T$. In this study, we propose an algorithm that solves this problem in $O((n + m)q)$ time considering the distance between two adjacent occurrences of the same $q$-gram contained in $P$. We also propose a theoretical improvement of it which runs in $O(n + m)$ time, though it is not necessarily faster in practice. We compare the execution times of our and existing algorithms on various kinds of real and artificial datasets such as an English text, a genome sequence and a Fibonacci string. The experimental results show that our algorithm is as fast as the state-of-the-art algorithms in many cases, particularly when a pattern frequently appears in a text.
\end{abstract}

\section{Introduction}\label{sec:introduction}
The exact string matching problem is a task to find all occurrences of $P$ in $T$ when given a text $T$ of length $n$ and a pattern $P$ of length $m$.
A brute-force solution of this problem is to compare $P$ with all the substrings of $T$ of length $m$.
It takes $O(nm)$ time.
The Knuth-Morris-Pratt (KMP) algorithm~\cite{Knuth1977kmp} is well known as an algorithm that can solve the problem in $O(n+m)$ time.
However, it is not efficient in practice, because it scans every position of the text at least once. 
The Boyer-Moore algorithm~\cite{Boyer1977bm} is famous as an algorithm that can perform string matching fast in practice by skipping many positions of the text, though it has $O(nm)$ worst-case time complexity.
Like this, many efficient algorithms whose worst-case time complexity is the same or even worse than the naive method have been proposed so far~\cite{Horspool1980hor, Sunday1990qs, Lecroq2007hashq}.
For example, the HASH$q$ algorithm~\cite{Lecroq2007hashq} focuses on the substrings of length $q$ in a pattern and obtains a larger shift amount.
However, considering that such an algorithm is embedded in software and actually used, if the worst-case input strings are given, the operation of the software may be slowed down.
Therefore, an algorithm that operates theoretically and practically fast is important.
Franek et al.~\cite{Franek2007fjs} proposed the Franek-Jennings-Smyth (FJS) algorithm, which is a hybrid of the KMP algorithm and the Sunday algorithm~\cite{Sunday1990qs}.
The worst-case time complexity of the FJS algorithm is $O(n+m+\sigma)$ and it works fast in practice, where $\sigma$ is the alphabet size.
Kobayashi et al.~\cite{Kobayashi2019improvedfjs} proposed an algorithm that improves the speed of the FJS algorithm by combining a method that extends the idea of the Quite-Naive algorithm~\cite{Cantone2004qn}.
This algorithm has the same worst-case time complexity as the FJS algorithm, and it runs faster than the FJS algorithm in many cases.
The LWFR$q$ algorithm~\cite{Cantone2019wfr} is a practically fast algorithm that works in linear time.
This algorithm uses a method of quickly recognizing substrings of a pattern using a hash function.
See~\cite{FaroLecroqSurvey2013,Hakak2019survey} for recent surveys on exact string matching algorithms.

This paper proposes two new exact string matching algorithms based on the HASH$q$ and KMP algorithms incorporating a new idea based on the distances of occurrences of the same $q$-grams.
The time complexity of the preprocessing phase of the first algorithm is $O(mq)$ and the search phase runs in $O(nq)$ time.
The second algorithm improves the theoretical complexity of the first algorithm, where the preprocessing and searching times are $O(m)$ and $O(n)$, respectively.
Our algorithms are as fast as the state-of-the-art algorithms in many cases.
Particularly, our algorithms work faster when a pattern frequently appears in a text.

This paper is organized as follows. 
Section~\ref{sec:preliminaries} briefly reviews the KMP and HASH$q$ algorithms, which are the basis of the proposed algorithms.
Section~\ref{sec:proposed} proposes our algorithms.
Section~\ref{sec:experiments} shows experimental results comparing the proposed algorithms with several other algorithms using artificial and practical data.
Section~\ref{sec:conclusion} draws our conclusions.

\section{Preliminaries}\label{sec:preliminaries}

\subsection{Notation}
Let $\Sigma$ be a set of characters called an \emph{alphabet} and $\sigma = |\Sigma|$ be its size.
$\Sigma^*$ denotes the set of all strings over $\Sigma$.
The length of a string $w \in \Sigma^*$ is denoted by $|w|$.
The \emph{empty string}, denoted by $\varepsilon$, is the string of length zero.
The $i$-th character of $w$ is denoted by $w[i]$ for each $1 \le i \le |w|$.
The substring of $w$ starting at $i$ and ending at $j$ is denoted by $w[i:j]$ for $1 \le i \leq j \le |w|$.
For convenience, let $w[i:j] = \varepsilon$ if $i > j$.
A string $w[1:i]$ is called a \emph{prefix} of $w$
and a string $w[i:|w|]$ is called a \emph{suffix} of $w$.
A string $v$ is a \emph{border} of $w$ if $v$ is both a prefix and a suffix of $w$.
Note that the empty string is a border of any string.
Moreover, a prefix, a suffix or a border $v$ of $w$ is called \emph{proper} when $v \neq w$.
The length of the longest proper border of $w[1:i]$ for $1 \le i \le |w|$ is given by
\begin{align*}
	\border_w[i] = 
\max\{\,j \mid w[1:j] = w[i-j+1:i] \text{ and } 0 \le j < i\,\}\,.
\end{align*}
Throughout this paper, we assume $\Sigma$ is an integer alphabet.

\subsection{The exact string matching problem}
The exact string matching problem is defined as follows:
\begin{align*}
	\textbf{Input: }& \text{A text $T \in \Sigma^*$ of length $n$ and a pattern $P \in \Sigma^*$ of length $m$,} \\
	\textbf{Output: }& \text{All positions $i$ such that $T[i:i+m-1]=P$ for $1 \leq i \leq n-m+1$.}
\end{align*}
We will use a text $T \in \Sigma^*$ of length $n$ and a pattern $P \in \Sigma^*$ of length $m$ throughout the paper.

Let us consider comparing $T[i:i+m-1]$ and $P[1:m]$.
The naive method compares characters of the two strings from left to right.
When a character mismatch occurs, the pattern is shifted to the right by one character.
That is, we compare $T[i+1:i+m]$ and $P[1:m]$.
This naive method takes $O(nm)$ time for matching. 
There are a number of ideas to shift the pattern more so that searching $T$ for $P$ can be performed more quickly,
using shifting functions obtained by preprocessing the pattern.

\subsection{Knuth-Morris-Pratt algorithm}
The Knuth-Morris-Pratt (KMP) algorithm~\cite{Knuth1977kmp} is well known as a string matching algorithm that has linear worst-case time complexity.
When the KMP algorithm has confirmed that $T[i:i+j-2] = P[1:j-1]$ and $T[i+j-1] \neq P[j]$ for some $j \leq m$, it shifts the pattern so that a suffix of $T[i:i+j-2]$ matches a prefix of $P$ and we do not have to re-scan any part of $T[i:i+j-2]$ again.
That is, the pattern can be shifted by $j-k-1$ for $k=\border_P[j-1]$.
In addition, if $P[k+1] = P[j]$, the same mismatch will occur again after the shift.
In order to avoid this kind of mismatch, we use $\sborder[1:m+1]$ given by
\begin{equation*}
	\sborder_P(j) = \left\{
		\begin{aligned}
			& \border_P(m) && \hspace{-3mm}\text{if $j = m + 1$,} \\
			& \max\!\big(\{\,k \mid P[1:k] = P[j-k:j-1],\; P[k+1] \neq P[j],\; \\ 
			& \text{\phantom{$\max\big(\{\,k \mid {}$}} 0 \le k < j\,\} \cup \{-1\}\big) && \hspace{-3mm}\text{otherwise.} \\
		\end{aligned}
	\right.
\end{equation*}
The amount $\kmpshift[j]$ of the shift is given by
\begin{align*}
	\kmpshift[j] = j - \sborder_P(j) - 1.
\end{align*}
This function has a domain of $\{1,\dots,m+1\}$ and is implemented as an array in the algorithm.
Hereafter, we identify some functions and the arrays that implement them.
\begin{fact}\label{fact:kmp}
If $P[1:j-1]=T[i:i+j-2]$ and $P[j] \neq T[i+j-1]$, then $P[1:j-k_j-1] = T[i+k_j:i+j-2]$ holds for $k_j = \kmpshift[j]$.
Moreover, there is no positive integer $k < \kmpshift[j]$ such that $P = T[i+k:i+k+m-1]$.
\end{fact}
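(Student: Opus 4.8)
The plan is to establish the two assertions separately, each by unwinding the definition of $\sborder_P$ and lining up indices; nothing here is deep, but the bookkeeping has to be done carefully. Write $s = \sborder_P(j)$ and $k_j = \kmpshift[j] = j - s - 1$, so that the string $P[1:j-k_j-1]$ appearing in the first assertion is exactly $P[1:s]$. Since the hypothesis $P[j] \neq T[i+j-1]$ forces $P[j]$ to be defined, we have $j \le m$, so we are in the ``otherwise'' branch of the definition of $\sborder_P(j)$. If $s = -1$, then $k_j = j$ and both $P[1:s]$ and $T[i+k_j:i+j-2] = T[i+j:i+j-2]$ are empty by the convention $w[a:b] = \varepsilon$ for $a > b$, so the first assertion holds trivially. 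If $s \ge 0$, the definition gives $P[1:s] = P[j-s:j-1]$; combined with the hypothesis $P[1:j-1] = T[i:i+j-2]$ this yields $P[j-s:j-1] = T[i+j-1-s:i+j-2]$, and since $i+j-1-s = i+k_j$ we obtain $P[1:s] = T[i+k_j:i+j-2]$, which is the first assertion.

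For the ``moreover'' part I would argue by contradiction. Suppose $0 < k < k_j$ and $P = T[i+k:i+k+m-1]$, and put $\ell = j-1-k$; then $0 \le \ell < j$ and $\ell > j-1-k_j = s$. Two consequences are read off from the assumed occurrence. Evaluating it at the index that maps to $T[i+j-1]$ gives $P[j-k] = T[i+j-1]$, which by the mismatch hypothesis differs from $P[j]$; as $j-k = \ell+1$, this says $P[\ell+1] \neq P[j]$. Taking instead the length-$\ell$ prefix of the occurrence gives $P[1:\ell] = T[i+k:i+j-2]$, while the known match $P[1:j-1] = T[i:i+j-2]$ gives $T[i+k:i+j-2] = P[k+1:j-1] = P[j-\ell:j-1]$; hence $P[1:\ell] = P[j-\ell:j-1]$. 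Therefore $\ell$ lies in the set $\{\,k' \mid P[1:k'] = P[j-k':j-1],\; P[k'+1] \neq P[j],\; 0 \le k' < j\,\}$ whose maximum is $s$, contradicting $\ell > s$.

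I expect the only friction to come from the degenerate cases, and that is where I would be most careful: verifying the first assertion still reads correctly when $s = -1$, and checking that the contradiction in the second part survives when $\ell = 0$ (the ``border of length $0$'' being $\varepsilon$, with the inequality reading $0 > -1 = s$). It is also worth recording that $k' = j-1$ never belongs to the set above, since that would require $P[j] \neq P[j]$; hence $s \le j-2$ and $k_j \ge 1$, so the shift is always genuinely positive and the ``moreover'' statement concerns an honest range of $k$. Beyond that, the proof amounts to matching the half-open conventions of $w[a:b]$ against the index shifts, so I do not anticipate a genuine obstacle.
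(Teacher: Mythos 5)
Your proof is correct and complete. The paper states this as a \emph{Fact} without proof (it is classical KMP theory), and your argument is the standard one: the first assertion is exactly the border property $P[1:s]=P[j-s:j-1]$ for $s=\sborder_P(j)$ transported into $T$ via the hypothesis $P[1:j-1]=T[i:i+j-2]$, and the second follows from the maximality of $s$, since an occurrence at shift $k<\kmpshift[j]$ would make $\ell=j-1-k$ a member of the defining set with $\ell>s$. Your handling of the degenerate cases ($s=-1$, $\ell=0$) and the observation that $k'=j-1$ is excluded (so $\kmpshift[j]\ge 1$) are both accurate; I see no gap.
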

Note that if the algorithm has confirmed $T[i:i+m-1] = P$, the shift is given by $\kmpshift[m+1]$ after reporting the occurrence of the pattern.
Algorithm~\ref{alg:kmp_kmpshift} shows a pseudocode to compute the array $\kmpshift$.
It runs in $O(m)$ time.
By using $\kmpshift$, the KMP algorithm finds all occurrences of $P$ in $T$ in $O(n)$ time.
\begin{algorithm2e}[btp]
\selectfont
\caption{Computing $\kmpshift$}
\label{alg:kmp_kmpshift}
\SetKwProg{Fun}{Function}{}{end}
\Fun{$\PreKMP(P)$}{
	$m \leftarrow |P|$;
	$i \leftarrow 1$;
	$j \leftarrow 0$\;
	$\sborder_P[1] \leftarrow -1$\;
	\While {$i \leq m$}{
		\lWhile{$j > 0$ \textup{and} $P[i] \neq P[j]$}{%
			$j \leftarrow \sborder_P[j]$%
		}
		$i \leftarrow i + 1$;
		$j \leftarrow j + 1$\;
		\lIf{$i \leq m$ \textup{and} $P[i] = P[j]$}{%
			$\sborder_P[i] \leftarrow \sborder_P[j]$%
		}\lElse{%
			$\sborder_P[i] \leftarrow j$%
		}
	}
	\lFor{$j \leftarrow 1$ \textup{\textbf{to}} $m$}{%
		$\kmpshift[j] \leftarrow j - \sborder_P[j] - 1$%
	}
	\Return $\kmpshift$
}
\end{algorithm2e}

\subsection{HASH$q$ algorithm}\label{sec:hashq}
The HASH$q$ algorithm~\cite{Lecroq2007hashq} is an adaptation of the Wu-Manber multiple string matching algorithm~\cite{Wu1994afast} to the single string matching problem.
Before comparing $P$ and $T[i:i+m-1]$, the HASH$q$ algorithm shifts the pattern so that the suffix $q$-gram $T[i+m-q:i+m-1]$ of the text substring shall match the rightmost occurrence of the same $q$-gram in the pattern.
For practical efficiency, we use a hash function, though it may result in aligning mismatching $q$-grams occasionally.
The shift amount is given by $\mathit{shift}(h(T[i+m-q:i+m-1]))$ where
\begin{align*}
\mathit{shift}(c) &= m - \max(\{j\;|\; h(P[j-q+1:j])=c ,\; q \leq j \leq m\} \cup \{q-1\}) \,,
\\
h(x) &= (2^{q-1} \cdot x[1] + 2^{q-2} \cdot x[2] + \cdots + 2 \cdot x[q-1] + x[q]) {\rm \;mod\; 2^8}.
\end{align*}
We repeatedly shift the pattern till the suffix $q$-grams of the pattern and the considered text substring have a matching hash value, in which case the shift amount will be 0.
We then compare the characters of the pattern and the text substring from left to right.
If a character mismatch occurs during the comparison, the pattern is shifted by
\begin{equation}\label{eq:hashq_dist}
 \min(\{\,k \mid h(P[m'-k:m-k])=h(P[m':m]) ,\; 1 \leq k \le m-q\,\} \cup \{m'\})
\end{equation}
where $m'=m-q+1$,
since the $q$-gram suffixes of the pattern and the text substring have the same hash values.
The time complexity of the preprocessing phase for computing the shift function is $O(mq)$.
The searching phase has $O(n(m+q))$ time complexity.
The worst-case time complexity is worse than that of the naive method, but it works fast in practice.
\begin{fact}\label{fact:hashq}
If $\mathit{shift}(h(T[i+m-q:i+m-1])) = j \neq m-q+1$, then $h(P[m-j-q+1:m-j])=h(T[i+m-q:i+m-1])$.
There is no positive integer $k < j$ such that $P = T[i+k:i+k+m-1]$.
\end{fact}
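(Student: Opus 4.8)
The plan is to establish the two assertions of Fact~\ref{fact:hashq} separately, both following almost directly from the definition of $\mathit{shift}$. First I would unfold the definition: $\mathit{shift}(h(T[i+m-q:i+m-1])) = m - \ell$ where $\ell = \max(\{j \mid h(P[j-q+1:j]) = h(T[i+m-q:i+m-1]),\ q \le j \le m\} \cup \{q-1\})$. The hypothesis $j \neq m-q+1$ rules out the case $\ell = q-1$ (which would give shift $m-q+1$), so the maximum is attained by some index in the first set, i.e.\ there is a $j'$ with $q \le j' \le m$ and $h(P[j'-q+1:j']) = h(T[i+m-q:i+m-1])$, and $m - j' = \mathit{shift}(\cdots) = j$. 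Substituting $j' = m-j$ gives $h(P[m-j-q+1:m-j]) = h(T[i+m-q:i+m-1])$, which is the first claim.

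For the second claim, I would argue by contradiction: suppose there were a positive integer $k < j$ with $P = T[i+k:i+k+m-1]$. Then in particular the suffix $q$-gram matches exactly, $P[m-q+1:m] = T[i+k+m-q:i+k+m-1]$, and since $k < j = m - j'$ we have $i+k+m-1 < i+m-1 + (m-j')$, so this occurrence's suffix $q$-gram sits strictly to the left of position $i+m-1$; more precisely $T[i+k+m-q:i+k+m-1]$ is the $q$-gram of $T$ ending at position $i+k+m-1$, which lies in the window $[i+m-q, i+m-1]$ shifted left. The key point is that an exact occurrence at offset $k$ would force $h(P[m-q+1:m]) = h(T[i+k+m-q:i+k+m-1])$, and aligning this with the pattern's own suffix $q$-gram yields an index $j'' = m-k > m - j = j'$ (or one reached through the shift chain) with matching hash value, contradicting the maximality of $\ell = j'$ in the definition of $\mathit{shift}$. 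I would need to phrase this carefully in terms of the repeated-shifting process described before the fact, so that the ``$\max$'' really does certify that no smaller positive shift can align the full pattern.

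The main obstacle I anticipate is the bookkeeping around the hash function rather than any deep idea: because $h$ is not injective, one must be careful that the argument uses only equalities of hash values (never of the underlying $q$-grams) except where an exact match is explicitly assumed, and one must track the indices through the possibly-iterated shift loop so that the final contradiction with maximality of $\ell$ is airtight. A secondary subtlety is boundary handling — ensuring $m-j-q+1 \ge 1$ and $m-j \le m$ so that $P[m-j-q+1:m-j]$ is a well-defined $q$-gram — but this is immediate from $q \le j' = m-j \le m$. Once these index ranges are pinned down, both parts are short.
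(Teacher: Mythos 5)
The paper states Fact~\ref{fact:hashq} without proof, so there is no official argument to compare against; your task is to supply the standard justification from the definition of $\mathit{shift}$, and your plan is the right one. The first part is correct and complete: $j \neq m-q+1$ forces the maximum $\ell$ to be attained at some $j' \in [q,m]$ with $m - j' = j$, and substituting $j' = m-j$ gives the claimed hash equality.

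The second part reaches the correct contradiction but through a misstated intermediate step. You appeal to $P[m-q+1:m] = T[i+k+m-q:i+k+m-1]$, the suffix $q$-gram of the hypothetical occurrence, and assert it ``sits strictly to the left of position $i+m-1$''; in fact for $k>0$ it ends at $i+k+m-1 > i+m-1$, i.e.\ to the \emph{right}, and in any case that window is not the one whose hash defines $\mathit{shift}$, so no contradiction with the maximality of $\ell$ can be extracted from it. The clean step goes the other way: since $k < j \le m-q$ (the bound $j \le m-q$ holds because $j \neq m-q+1$ implies $\ell \ge q$), the hashed window $T[i+m-q:i+m-1]$ lies entirely inside the occurrence $T[i+k:i+k+m-1]$, where it coincides character-for-character with $P[m-k-q+1:m-k]$. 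Hence $h(P[m-k-q+1:m-k]) = h(T[i+m-q:i+m-1])$ with $q \le m-k \le m$, so $m-k$ belongs to the set over which $\ell$ is maximized, and $m-k > m-j = \ell$ is the desired contradiction. With that replacement your argument closes, and no appeal to the ``repeated shift chain'' is needed, since the statement concerns a single evaluation of $\mathit{shift}$. The same containment argument also covers $j = m-q+1$ (where the set defining $\ell$ would have to be empty), so the second sentence of the fact holds for that shift value as well.
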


\section{Proposed algorithms}\label{sec:proposed}

\subsection{DIST$q$ algorithm}\label{sec:distq}
Our proposed algorithm uses three kinds of shifting functions.
The first one $\shift$ is essentially the same as $\mathit{shift}$, the one used in the HASH$q$ algorithm, except for the hashing function.
The second one $\darray$ is based on the distance of the closest occurrences of the $q$-grams of the same hash value in the pattern.
We involve $\kmpshift$ as the third one to guarantee the linear-time behavior.

Formally, the first shifting function is given as
\begin{align*}
\shift[c] &= m - \max(\{j \mid h(P[j-q+1:j])=c ,\; q \leq j \leq m\} \cup \{q-1\}),
\\ \text{where }\;& h(x) = (4^{q-1} \cdot x[1] + 4^{q-2} \cdot x[2] + \cdots + 4 \cdot x[q-1] + x[q]) {\rm \;mod\; 2^{16}}.
\end{align*}
Fact~\ref{fact:hashq} holds for $\shift$.

The second shifting function is defined for $j=q,\dots,m$ by
\begin{align*}
\darray[j] &= \min(\{\, k \mid h(P[j'-k:j-k]) = h(P[j':j]),\; 1 \le k \le j-q \,\} \cup \{j'\})
\end{align*}
where $j'=j-q+1$.
This function $\darray$ is a generalization of the shift (Eq. $\ref{eq:hashq_dist}$) used in the HASH$q$ algorithm.
We have $\darray[j] = k < j'$ if the $q$-gram ending at $j$ and the one ending at $j-k$ have the same hash value, while no $q$-grams occurring between those have the same value.
If no $q$-gram ending before $j$ has the same hash value, then $\darray[j]=j'$.
By using this, in the situation where $h(P[j-q+1:j]) = h(T[i+j-q:i+j-1])$, when a mismatch occurs anywhere between $T[i:i+m-1]$ and $P$, the pattern can be shifted by $\darray[j]$.
\begin{fact}\label{fact:dist}
Suppose that $h(P[j-q+1:j])=h(T[i+j-q:i+j-1])$.
Then $h(P[j-q+1-\darray[j]:j-\darray[j]])=h(T[i+j-q:i+j-1])$, unless $\darray[j] = j-q+1$.
Moreover, there is no positive integer $k < \darray[j]$ such that $P = T[i+k:i+k+m-1]$.
\end{fact}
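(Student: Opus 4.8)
The plan is to prove both parts by directly unwinding the definition of $\darray$, so the only real content is careful index bookkeeping. Write $j' = j-q+1$ and let $S_j = \{\, k \mid h(P[j'-k:j-k])=h(P[j':j]),\; 1 \le k \le j-q \,\}$, so that $\darray[j] = \min(S_j \cup \{j'\})$. I would first record the easy observations that $\darray[j] \le j'$ always and that every element of $S_j$ is at most $j'-1$; hence $\darray[j] \neq j'$ forces $S_j \neq \emptyset$ and $\darray[j] = \min S_j \in S_j$.

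For the first assertion, assuming $\darray[j] \neq j'$, the membership $\darray[j] \in S_j$ immediately gives $h(P[j'-\darray[j]:j-\darray[j]]) = h(P[j':j])$; chaining this with the hypothesis $h(P[j':j]) = h(T[i+j-q:i+j-1])$ yields the claimed equality, noting that $j'-\darray[j] = j-q+1-\darray[j]$. This step is essentially immediate.

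For the second assertion I would argue by contradiction: suppose $P = T[i+k:i+k+m-1]$ for some integer $k$ with $1 \le k < \darray[j]$. Since $\darray[j] \le j'$ we have $k \le j'-1 = j-q$, so $P[j'-k:j-k]$ is a genuine $q$-gram of $P$; reading positions $j'-k,\dots,j-k$ off the assumed equality $P = T[i+k:i+k+m-1]$ (and using $i + j' - 1 = i + j - q$) gives $P[j'-k:j-k] = T[i+j-q:i+j-1]$. Applying $h$ and using the hypothesis once more puts $k$ into $S_j$, which contradicts $\darray[j] = \min S_j \le k < \darray[j]$.

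I expect no genuine obstacle here; the one place that needs attention is confirming that a hypothetical short shift $k$ still points to a well-defined $q$-gram $P[j'-k:j-k]$ inside the pattern (i.e.\ that $j'-k \ge 1$), which is exactly where the universal bound $\darray[j] \le j'$ is used. Everything else is substitution into the definitions; the same reasoning with $h$ removed specializes at $j=m$ to the HASH$q$ shift of Eq.~\ref{eq:hashq_dist}, which serves as a sanity check.
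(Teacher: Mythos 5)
Your proof is correct and is exactly the direct verification the paper intends: the statement is presented as a Fact without an explicit proof, and your argument (that $\darray[j]\neq j'$ forces $\darray[j]\in S_j$, giving the first claim, and that any shorter valid shift $k$ would itself lie in $S_j$ and contradict minimality, giving the second) is the natural unwinding of the definition, with the index bound $k\le j-q$ ensuring $P[j'-k:j-k]$ is well defined handled properly. The only (inconsequential) slip is your closing sanity check: Eq.~(\ref{eq:hashq_dist}) already involves $h$, so $\darray[m]$ coincides with it as stated rather than ``with $h$ removed.''
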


Those functions $\shift$, $\darray$ and $\kmpshift$ are computed in the preprocessing phase.
Algorithms~\ref{alg:dist_compute_shift} and~\ref{alg:dist_compute_darray} compute the arrays $\shift$ and $\darray$, respectively.
\begin{algorithm2e}[t]
\selectfont
\caption{Computing $\shift$}
\label{alg:dist_compute_shift}
\SetKwProg{Fun}{Function}{}{end}
\Fun{$\PreShift(P, q)$}{
	$m \leftarrow |P|$\;
	\For {$i \leftarrow 0$ \textup{\textbf{to}} $2^{16}-1$} {
		$\shift[i] \leftarrow m - q + 1$\;
	}
	\For {$i \leftarrow q$ \textup{\textbf{to}} $m$} {
		$hash \leftarrow h(P[i-q+1:i])$\;
		$\shift[hash] \leftarrow m - i$\;
	}
	\Return $\shift$\;
}
\end{algorithm2e}

\begin{algorithm2e}[t]
\selectfont
\caption{Computing $\darray$}
\label{alg:dist_compute_darray}
\SetKwProg{Fun}{Function}{}{end}
\Fun{$\PreDarray(P, q)$}{
	\For{$j \leftarrow 1$ \textup{\textbf{to}} $q-1$}{
		$\darray[j] \leftarrow 1$\;
	}
	\For{$j \leftarrow 0$ \textup{\textbf{to}} $2^{16}-1$}{
		$prevpos[j] \leftarrow 0$\;
	}
	\For{$j \leftarrow q$ \textup{\textbf{to}} $|P|$}{
		$hash \leftarrow h(P[j-q+1:j])$\;
		\lIf{$prevpos[hash] = 0 $}{%
			$d \leftarrow j - q + 1$%
		}
		\lElse{%
			$d \leftarrow j - prevpos[hash]$%
		}
		$\darray[j] \leftarrow d$;
		$prevpos[hash] \leftarrow j$\;
	}
	\Return $\darray$\;
}
\end{algorithm2e}

Figure~\ref{fig:shift_d} shows examples of shifting the pattern using $\shift$ and $\darray$.
\begin{figure*}[t]
	\centering
	\includegraphics[width=13.5cm]{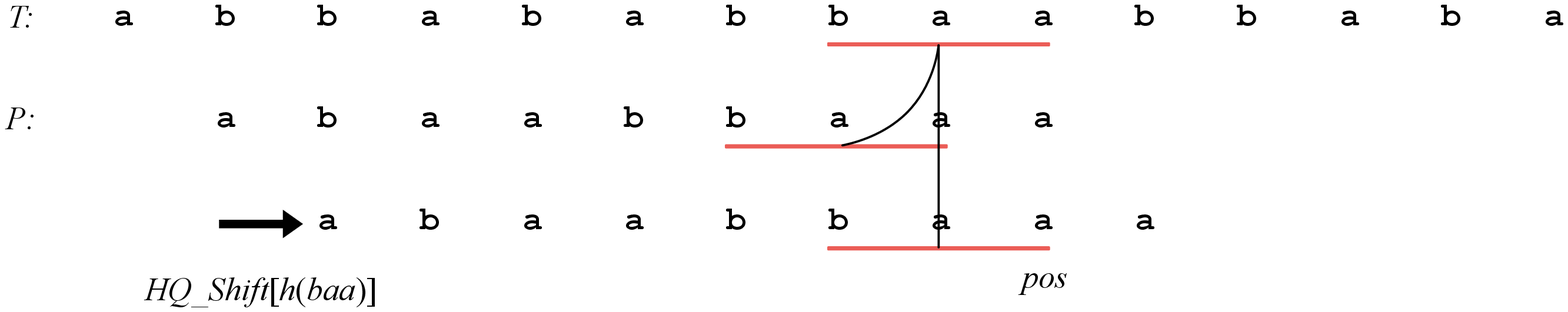}
	\includegraphics[width=13.5cm]{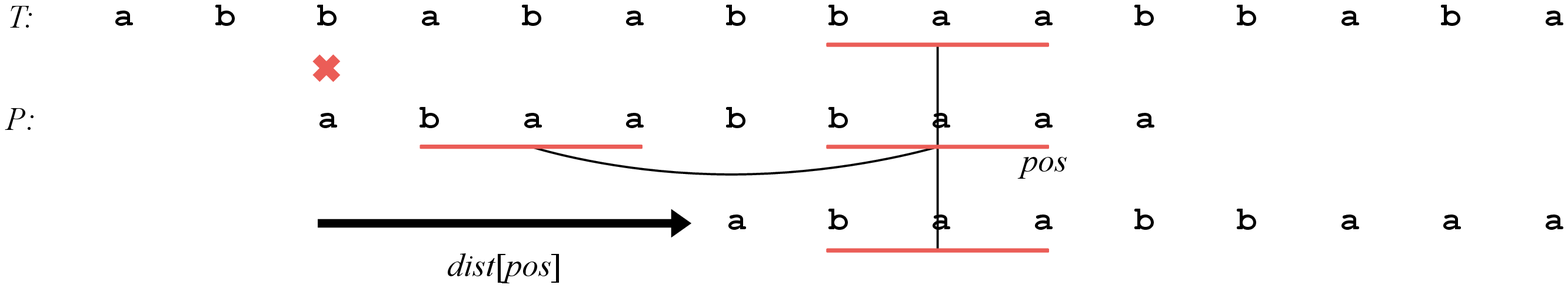}
	\caption{Shifting a pattern using $\shift$ and $\darray$}
	\label{fig:shift_d}
\end{figure*}
Both functions $\shift$ and $\darray$ shift the pattern using $q$-gram hash values based on Facts~\ref{fact:hashq} and~\ref{fact:dist}, respectively.
The latter can be used only when we know that the pattern and the text substring have aligned $q$-grams ending at $j$ with the same hash value
and it may shift the pattern at most $j-q+1$, while the former can be used anytime and the maximum possible shift is $m-q+1$.
The advantage of the function $\darray$ is in the computational cost.
If we know that the premise of Fact~\ref{fact:dist} is satisfied, we can immediately perform the shift based on $\darray$,
while computing $\shift(h(w))$ for the concerned $q$-gram $w$ in the text is not as cheap as $\darray[j]$.
Our algorithm exploits this advantage of the new shifting function $\darray$.

\begin{algorithm2e}[t!]
\selectfont
\caption{\distq{} algorihm}
\label{alg:dist_search_algorithm}
\SetKwProg{Fun}{Function}{}{end}
\Fun{$\mathtt{DISTq}(P,T, q)$}{
	$\kmpshift \leftarrow \PreKMP(P)$\;
	$\shift \leftarrow \PreShift(P, q)$\;
	$\darray \leftarrow \PreDarray(P, q)$\;
	$n \leftarrow |T|$;
	$m \leftarrow |P|$;
	$i \leftarrow 1$;
	$j \leftarrow 1$;
	$k \leftarrow m$\;
	\While{$k \leq n$}{
		\If{$j \leq 1$}{ 
			\While(\tcp*[f]{Alignment-phase}){{\rm True}}{
				$sh \leftarrow \shift[h(T[k-m+1:k])]$\label{ln:dist_hash};
				$k \leftarrow k + sh$\;
				\If{$sh \neq m - q + 1$}{
					$pos \leftarrow m - 1 - sh$\;
					\lIf{$P[1] = T[k-m+1]$}{%
						\textbf{break}%
					}
					$k \leftarrow k + \darray[pos]$\;
				}
				\lIf{$k > n$}{%
					\textbf{halt}%
				}
			}
			$j \leftarrow 2$;
			$i \leftarrow k-m+2$; \tcp*[f]{Comparison-phase}\\
			\While{$j \le m$ \textup{and} $P[j] = T[i]$}{
				$i \leftarrow i+1$;
				$j \leftarrow j+1$\;
			}
			\If{$j = m + 1$}{
				\textbf{output} $i-m$\;
			}
			
			\If{$\darray[pos] \geq j - 1 {\rm \;and\;} \darray[pos] \geq \kmpshift[j]$}{
				$j \leftarrow j - \darray[pos]$\;
			}
			\Else{
				$j \leftarrow j - \kmpshift[j]$\;
			}
		}\Else{
			\While(\tcp*[f]{KMP-phase}){$j \leq m$ \textup{and} $P[j] = T[i]$}{
				$i \leftarrow i+1$;
				$j \leftarrow j+1$\;
			}
			
			\If{$j = m + 1$}{
				\textbf{output} $i-m$\;
			}
			$j \leftarrow j - \kmpshift[j]$\label{ln:kmp2}\;
		}
		$k \leftarrow i+m-j$\;
	}
}
\end{algorithm2e}

Next, we explain our searching algorithm shown in Algorithm~\ref{alg:dist_search_algorithm}.
The searching phase is divided into three: \emph{Alignment-phase}, \emph{Comparison-phase}, and \emph{KMP-phase}.
The goal of the Alignment-phase is to shift the pattern as far as possible without comparing each single character of the pattern and the text.
The Alignment-phase ends when we align the pattern and a text substring that have (a) aligned $q$-grams of the same hash value and (b) the same first character.
Suppose $P$ and $T[k-m+1:k]$ are aligned at the beginning of the Alignment-phase.
If $s = \shift[h(T[k-q+1:k])] \le m-q$, by shifting the pattern by $s$, we find the aligned $q$-grams of the same hash value.
Namely, $h(P[m-q-s+1:m-s])=h(T[k-q+1:k])$.
Otherwise, we shift the pattern by $m-q+1$ repeatedly until we find aligned $q$-grams of the same hash value.
When finding a position $\mathit{pos}$ satisfying (a) by aligning $P$ and $T[k'-m+1:k']$ for some $k'$, i.e., $h(P[\mathit{pos}-q+1:\mathit{pos}])=h(T[k'-m+\mathit{pos}-q+1:k'-m+\mathit{pos}])$, we simply check the condition (b).
If $P[1]$ and the corresponding text character match, we move to the Comparison-phase.
Otherwise, we safely shift the pattern using $\darray[\mathit{pos}]$.
Note that although it is possible to use the function $\shift(h(T[k'-q+1:k']))$ rather than $\darray[\mathit{pos}]$, the computation would be more expensive.
Shifting the pattern by $\darray[\mathit{pos}]$, unless $\darray[\mathit{pos}] = \mathit{pos}-q+1$, still the pattern and the aligned text substring satisfy (a).
However, we do not repeat $\darray$-shift any more, since the smaller $\mathit{pos}$ becomes, the smaller the expected shift amount will be.
We simply restart the Alignment-phase.
Once the conditions (a) and (b) are satisfied, we move on to the Comparison-phase.

In the Comparison-phase, we check the characters from $P[2]$ to $P[m]$.
If a character mismatch occurs during the comparison, either of the shift by $\kmpshift$ or by $\darray$ is possible.
Therefore, we select the one where the resumption position of the character comparison goes further to the right after shifting the pattern.
If the resumption position of the comparison is the same, we select the one with the larger shift amount.
Recall that when the KMP algorithm finds that $P[1:j-1]=T[i:i+j-2]$ and $P[j] \neq T[i+j-1]$, 
it resumes comparison from checking the match between $T[i+j-1]$ and $P[j-\kmpshift[j]]$ if $\kmpshift[j] < j$, and $T[i+j]$ and $P[1]$ if $\kmpshift[j] = j$.
On the other hand, if we shift the pattern by $\darray[pos]$, we simply resume matching $T[i+\darray[pos]]$ and $P[1]$.
Therefore, we should use $\kmpshift[j]$ rather than $\darray[pos]$ when either $\kmpshift[j] < j$ and $\darray[pos] < j-1$ or $\kmpshift[j] = j > \darray[pos]$.
Summarizing the discussion, we shift the pattern by $\darray[pos]$ if $\darray[pos] \geq j-1$ and $\darray[pos] \geq \kmpshift[j]$ hold.
Otherwise, we shift the pattern by $\kmpshift[j]$.
At this moment, we may have a ``\emph{partial match}'' between the pattern and the aligned text substring.
If we have performed the KMP-shift with $\kmpshift[j] < j-1$, then we have a match between the nonempty prefixes of the pattern and the aligned text substring of length $j-\kmpshift[j] -1$.
In this case, we go to the KMP-phase, where we simply perform the KMP algorithm.
The KMP-phase prevents the character comparison position from returning to the left and guarantees the linear time behavior of our algorithm.
If we have no partial match, we return to the Alignment-phase.

\begin{theorem}
The worst-case time complexity of the \distq{} algorithm is $O((n+m)q)$.
\end{theorem}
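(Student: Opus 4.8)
The plan is to split the running time into four parts and bound each: the preprocessing, the character comparisons done in the Comparison- and KMP-phases, the iterations of the inner \texttt{while}-loop of the Alignment-phase, and the $O(1)$ bookkeeping per iteration of the outer \texttt{while}-loop. Preprocessing is immediate: $\PreKMP$ runs in $O(m)$ time, while $\PreShift$ and $\PreDarray$ each perform $O(m)$ evaluations of the hash of a $q$-gram, each costing $O(q)$, plus $O(2^{16})$ array initialisation; so preprocessing costs $O(mq)$ in total (the $2^{16}$ being a fixed constant). For the rest I would argue by amortisation.

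First I would record two structural invariants about the variables $i$ (text position during comparisons), $j$ (pattern position), and $k$ (text position aligned with $P[m]$). (a) At the top of every iteration of the outer loop, $k = i + m - j$; this is the last assignment of the loop body and holds for the initial values. (b) The variable $i$ is non-decreasing throughout the execution: inside the Comparison- and KMP-phases $i$ is only ever incremented, and the Alignment-phase is entered only from the branch $j \le 1$, so at that moment $k = i + m - j \ge i + m - 1$; since the Alignment inner loop never decreases $k$ and $i$ is afterwards reset to $k - m + 2$, the new value of $i$ is at least $1$ more than the old one. A short case analysis on the three kinds of shift also gives $1 \le i \le n+1$ and $2 - m \le j \le m+1$, so $i$ and $j$ stay within intervals of width $O(n+m)$.

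The core of the proof is the potential $\Phi = 2i - j$. I would verify transition by transition that $\Phi$ never decreases and that each completed iteration of the outer loop increases it by at least $1$: a successful character comparison ($i$ and $j$ each incremented) raises $\Phi$ by $1$; a failed comparison leaves it unchanged; an executed shift $j \leftarrow j - \kmpshift[j]$ or $j \leftarrow j - \darray[\mathit{pos}]$ decreases $j$ by at least $1$ --- using $\kmpshift[j] \ge 1$, which follows from $\sborder_P(j) \le j - 2$ for $j \le m$, and for the $\darray$-shift using its guard $\darray[\mathit{pos}] \ge j - 1$ together with $j \ge 2$ in the Comparison-phase --- and hence raises $\Phi$; and re-entering the Alignment-phase sets $i$ to $k - m + 2 \ge i_{\mathrm{old}} + 2 - j_{\mathrm{old}}$ (by invariant (a) and since $k$ only grows in the inner loop) and $j$ to $2$, so $\Phi$ increases by at least $2 - j_{\mathrm{old}} \ge 1$. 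Since every completed outer-loop iteration contains such a shift and $\Phi$ ranges over an interval of width $O(n+m)$, the outer loop runs $O(n+m)$ times and the total number of character comparisons across all Comparison- and KMP-phases is $O(n+m)$; together with the $O(1)$ overhead per outer-loop iteration these contribute $O(n+m)$.

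It remains to bound the Alignment inner loop. Each of its iterations costs $O(q)$ (one $q$-gram hash plus $O(1)$ work), and, except for the iteration that breaks, it increases $k$ by at least $1$: when $sh = m - q + 1$ we add $m - q + 1 \ge 1$, and otherwise we add $sh + \darray[\mathit{pos}] \ge 0 + 1$ (or break). So if a particular Alignment-phase runs $t$ inner iterations, $k$ grows by at least $t - 1$ during it, and combining $k_{\mathrm{start}} \ge i_{\mathrm{old}} + m - 1$ with $i_{\mathrm{new}} = k_{\mathrm{end}} - m + 2$ gives $i_{\mathrm{new}} \ge i_{\mathrm{old}} + t$. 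Since $i$ is non-decreasing and at most $n+1$, summing over all Alignment-phases gives $\sum t = O(n)$, so the Alignment inner loop costs $O(nq)$ overall. Adding the four parts yields $O(mq) + O(n+m) + O(nq) = O((n+m)q)$. I expect the main obstacle to be the exhaustive transition-by-transition verification that $\Phi$ is monotone and that the indices stay in the claimed ranges --- in particular the delicate case where a $\darray$-shift in the Comparison-phase drives $j$ down to a non-positive value and control then returns to the Alignment-phase, where one must check that invariant (a) and the monotonicity of $i$ still hold.
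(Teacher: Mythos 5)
Your proof is correct and follows essentially the same decomposition as the paper's: $O(mq)$ preprocessing, $O(n+m)$ character comparisons justified by the KMP-style guarantee that the comparison position never moves left, and $O(nq)$ for the hash evaluations because each Alignment-phase iteration advances the window by at least one position. The paper simply asserts the bounds ``at most $2n-m$ comparisons'' and ``at most $n-q+1$ hash computations''; your potential $\Phi = 2i - j$ and the monotonicity of $k$ are just the explicit bookkeeping behind those same claims.
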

\begin{proof}
Since the proposed algorithm uses Fact~\ref{fact:kmp} on the KMP algorithm to prevent the character comparison position from going back to the left, the number of character comparisons is at most $2n-m$ times like the KMP algorithm.
In addition, the hash value of $q$-gram is calculated to perform the shift using $\shift$.
Since the hash value calculation requires $O(q)$ time and it is calculated at the maximum of $n-q+1$ places in the text, the hash value calculation takes $O(nq)$ time in total.
Therefore, the worst-case time complexity of the searching phase is $O(nq)$.
In the preprocessing, $O(mq)$ time is required to calculate the hash value of $q$-gram at $m-q+1$ locations.
\end{proof}

\begin{example}\label{ex:distq}
Let $P={\tt abaabbaaa}$.
The shifting functions $\darray$, $\kmpshift$, and $\shift$ are shown below.
The hash values are calculated by treating each character as its ASCII value, e.g.\ $\tt{a}$ is calculated as 97.
\begin{center}
	\begin{tabular}[h]{|c||c|c|c|c|c|c|c|c|c|c|}
		\hline
		$j$ & 1 & 2 & 3 & 4 & 5 & 6 & 7 & 8 & 9 & 10 \\ \hline
		$P$ & \tt{a} & \tt{b} & \tt{a} & \tt{a} & \tt{b} & \tt{b} & \tt{a} & \tt{a} & \tt{a} & \\ \hline
		$\darray$ & - & - & 1 & 2 & 3 & 4 & 5 & 4 & 7 & \\ \hline
		$\kmpshift$ & 1 & 1 & 3 & 2 & 4 & 3 & 7 & 6 & 7 & 8 \\ \hline
	\end{tabular}
\end{center}
\begin{center}
	\begin{tabular}[h]{|c||c|c|c|c|c|c|c|}
		\hline
		$x$    & \tt{aba} & \tt{baa} & \tt{aab} & \tt{abb} & \tt{bba} & \tt{aaa} & others \\ \hline
		$h(x)$ & 2041 & 2053 & 2038 & 2042 & 2057 & 2037 & \\ \hline
		$\shift[h(x)]$ & 6 & 1 & 4 & 3 & 2 & 0 & 7 \\ \hline
	\end{tabular}
\end{center}

Figure~\ref{fig:ex_distq} illustrates an example run of the DIST$q$ algorithm ($q=3$) for finding $P={\tt abaabbaaa}$ in $T={\tt abbaabbaababbabbaaabaabaabbaaa}$.
\begin{description}
\item[Attempt 1] We shift the pattern by one character for $\shift[h(T[7:9])] = \linebreak[4] \shift[h({\tt baa})] =  \shift[2053] = 1$. Since the position of the $q$-gram aligned by this shift is 8, $pos$ is updated to 8.
\item[Attempt 2] We check whether the first character of the pattern matches the corresponding character of the text. Finding $P[1] \neq T[2]$, the pattern is shifted by $\darray[pos] = \darray[8] = 4$.
\item[Attempt 3] We shift the pattern by $\shift[h(T[12:14])] = \shift[h(\tt{bba})] = 2$ and update $pos$ to 7.
\item[Attempt 4] We check whether the first character of the pattern matches the corresponding character of the text. From $P[1]=T[8] $, we compare the characters of $P[2:9]$ and $T[9:16]$ from left to right. Since $P[2] \neq T[9]$, the pattern is shifted by $\kmpshift[2]$ or $\darray[pos] = \darray[7]$. From $\kmpshift[2] = 1$, $\darray[7] = 5$, $\darray[pos] \geq 2 - 1$ and $\darray[pos] \geq \kmpshift[2]$ are satisfied. Therefore, we shift the pattern by $\darray[pos] = \darray[7] = 5$.
\item[Attempt 5] We shift the pattern by $\shift[h(T[19:21])] = \shift[h({\tt aba})] = \linebreak[4] \shift[2041] = 6$ and update $pos$ to 3.
\item[Attempt 6] We check whether the first character of the pattern matches the corresponding character of the text. By $P[1]=T[19]$, the characters of $P[2:9]$ and $T[20:27]$ are compared from left to right. Since $P[6] \neq T[24]$, the pattern is shifted by $\kmpshift[6]$ or $\darray[pos] = \darray[3]$. By $\kmpshift[6] = 3 > \darray[3] = 1$, the pattern is shifted by $\kmpshift[6] = 3$.
\item[Attempt 7] Attempt~6 shows that $P[1:2] = T[22:23]$, that is, there is a partial match, so we continue the comparison of $T[24:30]$ and $P[3:9]$. Since $T[24:30]=P[3:9]$, the pattern occurrence position 22 is reported.
\end{description}
\end{example}

\begin{figure}[t]
	\centering
		\scalebox{0.9}{
			\begin{tabular}{p{36pt}p{6pt}p{0pt}p{0pt}p{0pt}p{0pt}p{0pt}p{0pt}p{0pt}p{0pt}p{0pt}p{0pt}p{0pt}p{0pt}p{0pt}p{0pt}p{0pt}p{0pt}p{0pt}p{0pt}p{0pt}p{0pt}p{0pt}p{0pt}p{0pt}p{0pt}p{0pt}p{0pt}p{0pt}p{0pt}p{0pt}p{0pt}p{0pt}p{0pt}p{0pt}p{0pt}}
				& & 1 & 2 & 3 & 4 & 5 & 6 & 7 & 8 & 9 & 10 & 11 & 12 & 13 & 14 & 15 & 16 & 17 & 18 & 19 & 20 & 21 & 22 & 23 & 24 & 25 & 26 & 27 & 28 & 29 & 30 \\
				& $T:$ & \moji{a} & \moji{b} & \moji{b} & \moji{a} & \moji{a} & \moji{b} & \moji{\underline{b}} & \moji{\underline{a}} & \moji{\underline{a}} & \moji{b} & \moji{a} & \moji{\underline{b}} & \moji{\underline{b}} & \moji{\underline{a}} & \moji{b} & \moji{b} & \moji{a} & \moji{a} & \moji{\underline{a}} & \moji{\underline{b}} & \moji{\underline{a}} & \moji{a} & \moji{b} & \moji{a} & \moji{a} & \moji{b} & \moji{b} & \moji{a} & \moji{a} & \moji{a} \\[-2pt]
				
				\\[-6pt]
				{\small Attempt~1} & $P:$ & \moji{a} & \moji{b} & \moji{a} & \moji{a} & \moji{b} & \moji{\underline{b}} & \moji{\underline{a}} & \moji{\underline{a}} & \moji{a} \\[-2pt]
				
				&&& $\batu_{1}$ \\[-6pt]
				{\small Attempt~2} & $P:$ && \moji{a} & \moji{b} & \moji{a} & \moji{a} & \moji{b} & \moji{b} & \moji{a} & \moji{a} & \moji{a} \\[-2pt]
				
				\\[-6pt]
				{\small Attempt~3} & $P:$ &&&&&& \moji{a} & \moji{b} & \moji{a} & \moji{a} & \moji{\underline{b}} & \moji{\underline{b}} & \moji{\underline{a}} & \moji{a} & \moji{a} \\[-2pt]
				
				&&&&&&&&& $\maru_{1}$ & $\batu_{2}$ \\[-6pt]
				{\small Attempt~4} & $P:$ &&&&&&&& \moji{a} & \moji{b} & \moji{a} & \moji{a} & \moji{b} & \moji{b} & \moji{a} & \moji{a} & \moji{a} \\[-2pt]
				
				\\[-6pt]
				{\small Attempt~5} & $P:$ &&&&&&&&&&&&& \moji{a} & \moji{b} & \moji{a} & \moji{a} & \moji{b} & \moji{b} & \moji{a} & \moji{a} & \moji{a} \\[-2pt]
				
				&&&&&&&&&&&&&&&&&&&& $\maru_{1}$ & $\maru_{2}$ & $\maru_{3}$ & $\maru_{4}$ & $\maru_{5}$ & $\batu_{6}$ \\[-6pt]
				{\small Attempt~6} & $P:$ &&&&&&&&&&&&&&&&&&& \moji{a} & \moji{b} & \moji{a} & \moji{a} & \moji{b} & \moji{b} & \moji{a} & \moji{a} & \moji{a} \\[-2pt]
				
				&&&&&&&&&&&&&&&&&&&&&&& $\pmaru$ & $\pmaru$ & $\maru_{1}$ & $\maru_{2}$ & $\maru_{3}$ & $\maru_{4}$ & $\maru_{5}$ & $\maru_{6}$ & $\maru_{7}$ \\[-6pt]
				{\small Attempt~7} & $P:$ &&&&&&&&&&&&&&&&&&&&&& \moji{a} & \moji{b} & \moji{a} & \moji{a} & \moji{b} & \moji{b} & \moji{a} & \moji{a} & \moji{a} \\[-2pt]
		\end{tabular}}
		\caption{An example run of the DIST$q$ algorithm for a pattern $P={\tt abaabbaaa}$ and a text $T={\tt abbaabbaababbabbaaabaabaabbaaa}$. For each alignment of the pattern, $\maru$ and $\batu$ indicate a match and a mismatch between the text and the pattern, respectively. The character with $\pmaru$ is known to match the character at the corresponding position in the text without comparison. Subscript numbers show the order of character comparisons in each attempt.}
		\label{fig:ex_distq}
\end{figure}

\subsection{LDIST$q$ algorithm}\label{sec:ldistq}
The LDIST$q$ algorithm modifies the DIST$q$ algorithm so that the worst-case time complexity is independent of $q$.
In the DIST$q$ algorithm, if strings such as $T={\tt a}^n$ and $P={\tt ba}^{m-1}$ are given, $O(nq)$ time is required for searching phase because the hash values of each $q$-gram are calculated in the text.
Since the hash function $h$ defined in Section~\ref{sec:distq} is a rolling hash, if the hash value of $w[i:i+q-1]$ has already been obtained for a string $w$, the hash value of $w[i+1:i+q]$ can be computed in constant time by $h(w[i+1:i+q])=(4 \cdot (h(w[i:i+q-1])-4^{q-1} \cdot w[i]) + w[i+q]) {\rm \;mod\;} 2^{16}$.
The LDIST$q$ algorithm modifies Line~\ref{ln:dist_hash} of Algorithm~\ref{alg:dist_search_algorithm} so that 
we calculate the hash value of the $q$-gram using the previously calculated value of the other $q$-gram in the incremental way, if they overlap.
Similarly, the time complexity of the preprocessing phase can be reduced.

\begin{theorem}
The worst-case time complexity of the \ldistq{} algorithm is $O(n+m)$.
\end{theorem}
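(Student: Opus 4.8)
The plan is to reuse the analysis of the \distq{} algorithm almost verbatim and to isolate the one quantity that changes, namely the cost of evaluating the $q$-gram hash function $h$. Since \ldistq{} differs from \distq{} only in how hashes are computed, the logic governing character comparisons, shift choices, and the transitions among the Alignment-, Comparison-, and KMP-phases is unchanged; in particular, by Fact~\ref{fact:kmp} the number of character comparisons in the search phase is still at most $2n-m$, the outer loop still performs $O(n)$ iterations, each doing $O(1)$ bookkeeping, and the inner Alignment loop performs $O(n)$ iterations in total. Thus the only terms not already bounded by $O(n+m)$ are the evaluations of $h$ on $q$-grams of $T$ in the Alignment-phase and on $q$-grams of $P$ in the preprocessing.

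For preprocessing, $\PreKMP$ is untouched and runs in $O(m)$, and in $\PreShift$ and $\PreDarray$ the hashes are taken over the $m-q+1$ consecutive windows $P[i-q+1:i]$, so by the rolling property of $h$ the first costs $O(q)$ and each subsequent one $O(1)$, for $O(m)$ overall (the two size-$2^{16}$ initializations are $O(1)$); hence preprocessing is $O(m)$. For the search phase I would first prove a monotonicity lemma: the positions $k$ at which a hash $h(T[k-q+1:k])$ is evaluated, taken in evaluation order, form a strictly increasing sequence bounded by $n$. This follows by tracking $k$ through the code --- inside the Alignment loop every non-terminating iteration increases $k$ by at least $1$ through $sh$ and/or $\darray[\mathit{pos}]\ge 1$, and each completed outer iteration ends with $k \leftarrow i+m-j$, which exceeds the previous value of $k$ by exactly the last applied shift amount ($\ge 1$); this is just the ``the alignment never moves to the left'' property already invoked (through Facts~\ref{fact:kmp},~\ref{fact:hashq}, and~\ref{fact:dist}) to bound the character comparisons of \distq{}.

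Given the lemma, between consecutive evaluations at $k^{(t-1)}<k^{(t)}$ write $\Delta_t=k^{(t)}-k^{(t-1)}\ge 1$: \ldistq{} either rolls $h$ forward $\Delta_t$ times when $\Delta_t\le q$ (overlapping or adjacent windows), at cost $O(\Delta_t)$, or recomputes from scratch when $\Delta_t>q$, at cost $O(q)=O(\Delta_t)$; in both cases the cost is $O(\Delta_t)$, so the total over all evaluations telescopes to $O(n)$, plus $O(q)$ for the very first evaluation, which is $O(n)$ since $q\le m\le n$ whenever the search loop runs. Combined with the $O(m)$ preprocessing, the algorithm runs in $O(n+m)$ time. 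The step I expect to be the main obstacle is making the monotonicity lemma fully precise across all three phases and their transitions --- in particular, confirming that a return from the Comparison- or KMP-phase to the Alignment-phase never re-hashes a window at or to the left of a previously hashed one. This is essentially already contained in the correctness and linearity argument for \distq{}, so it should come down to a careful case analysis of the updates of $k$, $i$, and $j$ rather than anything genuinely new.
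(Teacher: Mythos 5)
Your proposal is correct and follows essentially the same route as the paper's own (much terser) proof: bound character comparisons by $2n-m$ as in \distq{}, use the rolling hash to make each $q$-gram evaluation cost proportional to the advance of the alignment, and amortize to $O(n)$ for the search and $O(m)$ for preprocessing. The monotonicity-plus-telescoping argument you spell out is exactly the justification the paper leaves implicit, so no new idea is needed beyond what you have.
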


\begin{proof}
Like the DIST$q$ algorithm, we compare characters at most $2n-m$ times.
To calculate the hash value of a $q$-gram, if it is overlapped with the $q$-gram for which the hash value has been calculated one step before, the incremental update is performed using the rolling hash.
Therefore, the calculation of the hash value of $q$-grams takes $O(n)$ time in total.
Thus, the worst-case time complexity of matching is $O(n)$.
Calculating the hash values of $q$-grams in the preprocess is performed in the same way, so it is done in $O(m)$ time.
\end{proof}

\section{Experiments}\label{sec:experiments}
In this section, we compare the execution times of the proposed algorithms with the existing algorithms listed below, where algorithms that run in linear time in the input string size are marked with $\star$.
\begin{itemize}
	\item BNDM$q$~\cite{Navarro1998bndmq}: Backward Nondeterministic DAWG Matching algorithm using $q$-grams with $q = 2,4 {\rm \;and\;} 6$, 
	\item SBNDM$q$~\cite{Allauzen1999sbndm}: Simplified version of the Backward Nondeterministic DAWG Matching algorithm using $q$-grams with $q = 2,4,6 {\rm \;and\;} 8$,
	\item KBNDM~\cite{Cantone2012kbndm}: Factorized variant of the BNDM algorithm,
	\item BSDM$q$~\cite{Faro2012bsdmq}: Backward SNR DAWG Matching algorithm using condensed alphabets with groups of $q$ characters with $1 \leq q \leq 8$,
	\item$\star$\,FJS~\cite{Franek2007fjs}: Franek-Jennings-Smyth algorithm, 
	\item$\star$\,FJS$+$~\cite{Kobayashi2019improvedfjs}: Modification of the FJS algorithm, 
	\item HASH$q$~\cite{Lecroq2007hashq}: Hashing algorithm using $q$-grams with $2 \leq q \leq 8$ (see Section~\ref{sec:hashq}),
	\item FS-$w$~\cite{Faro2012multi}: Multiple Windows version of the Fast Search algorithm~\cite{Cantone2005fast} implemented using $w$ sliding windows with $w = 1,2,4,6 {\rm \;and\;} 8$,
	\item IOM~\cite{Cantone2014occurrence}: Improved Occurrence Matcher, 
	\item WOM~\cite{Cantone2014occurrence}: Worst Occurrence Matcher, 
	\item SKIP$q$~\cite{Faro2016skipq}: Skip-Search algorithm using $q$-grams with $2 \leq q \leq 8$, 
	\item WFR$q$~\cite{Cantone2019wfr}: Weak-Factors-Recognition algorithm implemented with a $q$-chained loop with $2 \leq q \leq 8$, 
	\item$\star$\,LWFR$q$~\cite{Cantone2019wfr}: Linear-Weak-Factors-Recognition algorithm implemented with a $q$-chained loop with $2 \leq q \leq 8$, 
	\item$\star$\,DIST$q$: Our algorithm proposed in Section~\ref{sec:distq} (Algorithm~\ref{alg:dist_search_algorithm}) with $1 \leq q \leq 8$, 
	\item$\star$\,LDIST$q$: Our algorithm proposed in Section~\ref{sec:ldistq} with $1 \leq q \leq 8$. 
\end{itemize}

\newcolumntype{A}{D{+}{}{5}}
\newcommand{\winner}[2]{\underline{\bf{#1}}+\underline{\bf\ensuremath{.#2}}}

\begin{table}[b!]
	\caption{Genome sequence ($\sigma = 4$, $n=4641652$)}
	\label{tb:genome}
	\centering
	{\renewcommand{\arraystretch}{1.0}\setlength{\tabcolsep}{1mm}
		\scalebox{0.75}{
			\begin{tabular}{|l|AAAAAAAAAA|}
				\hline
				\multicolumn{1}{|c|}{$m$} & \multicolumn{1}{c}{2} & \multicolumn{1}{c}{4} & \multicolumn{1}{c}{8} & \multicolumn{1}{c}{16} & \multicolumn{1}{c}{32} & \multicolumn{1}{c}{64} & \multicolumn{1}{c}{128} & \multicolumn{1}{c}{256} & \multicolumn{1}{c}{512} & \multicolumn{1}{c|}{1024} \\
				\hline
BNDM$q$  & 218+.34^{(2)}& 174+.55^{(2)}& 84+.14^{(4)}& 64+.89^{(6)}& 60+.39^{(6)}& 61+.07^{(6)}& 60+.85^{(6)}& 62+.17^{(6)}& 61+.20^{(6)}& 60+.16^{(6)} \\
SBNDM$q$ & \winner{179}{90^{(2)}}& 154+.20^{(2)}& 80+.94^{(4)}& 73+.29^{(6)}& 57+.10^{(8)}& 61+.01^{(6)}& 61+.74^{(6)}& 61+.20^{(6)}& 61+.67^{(6)}& 60+.80^{(6)} \\
KBNDM  & 311+.78& 201+.99& 150+.15& 113+.84& 83+.23& 67+.83& 75+.65& 75+.39& 76+.58& 74+.72 \\
BSDM$q$  & 195+.38^{(2)}& \winner{118}{86^{(3)}}& 84+.23^{(5)}& 63+.03^{(6)}& 61+.26^{(6)}& 58+.75^{(6)}& 57+.50^{(7)}& 56+.99^{(6)}& 57+.01^{(6)}& 56+.58^{(6)} \\
$\star$\,FJS    & 407+.02& 353+.60& 311+.96& 279+.13& 308+.42& 297+.00& 266+.12& 317+.79& 317+.34& 296+.19 \\
$\star$\,FJS$+$ & 388+.44& 296+.70& 203+.03& 171+.17& 149+.52& 136+.59& 128+.55& 130+.39& 122+.51& 112+.99 \\
HASH$q$  & 571+.44^{(2)}& 272+.86^{(3)}& 126+.00^{(3)}& 88+.12^{(3)}& 68+.22^{(3)}& 58+.84^{(6)}& 55+.09^{(6)}& 59+.48^{(7)}& 57+.34^{(7)}& 57+.96^{(7)} \\
FS-$w$   & 332+.32^{(4)}& 245+.99^{(4)}& 184+.72^{(4)}& 158+.72^{(4)}& 143+.79^{(6)}& 125+.05^{(4)}& 123+.52^{(6)}& 117+.90^{(6)}& 108+.03^{(6)}& 100+.72^{(6)} \\
IOM    & 377+.25& 275+.36& 215+.72& 220+.97& 219+.86& 218+.12& 210+.61& 221+.31& 230+.15& 211+.69 \\
WOM    & 381+.54& 301+.46& 220+.34& 182+.30& 166+.27& 143+.24& 136+.20& 133+.75& 127+.40& 114+.74 \\
SKIP$q$  & 250+.89^{(2)}& 136+.18^{(3)}& 91+.51^{(4)}& 63+.96^{(6)}& 56+.79^{(7)}& 53+.09^{(7)}& 52+.10^{(7)}& 57+.12^{(7)}& 56+.97^{(8)}& 58+.00^{(6)} \\
WFR$q$   & 219+.50^{(2)}& 168+.39^{(2)}& 88+.86^{(4)}& 65+.82^{(4)}& 57+.19^{(8)}& 55+.12^{(2)}& 51+.77^{(3)}& 50+.04^{(3)}& 55+.02^{(6)}& \winner{54}{64^{(8)}} \\
$\star$\,LWFR$q$  & 216+.10^{(2)}& 173+.48^{(3)}& 88+.71^{(4)}& 60+.75^{(5)}& \winner{53}{84^{(5)}}& \winner{50}{48^{(6)}}& \winner{49}{65^{(8)}}& 48+.71^{(6)}& 54+.90^{(6)}& 54+.97^{(7)} \\
$\star$\,DIST$q$  & 186+.10^{(2)}& 125+.44^{(3)}& \winner{78}{56^{(4)}}& \winner{60}{48^{(5)}}& 55+.21^{(6)}& 52+.05^{(7)}& 51+.26^{(8)}& 50+.44^{(8)}& 54+.81^{(7)}& 55+.58^{(8)} \\
$\star$\,LDIST$q$ & 295+.55^{(2)}& 181+.99^{(3)}& 86+.58^{(4)}& 65+.29^{(6)}& 56+.74^{(6)}& 52+.31^{(6)}& 50+.39^{(6)}& \winner{48}{70^{(7)}}& \winner{54}{33^{(4)}}& 55+.22^{(7)} \\
				\hline
			\end{tabular}
		}
	}
\end{table}

\begin{table}[b!]
	\caption{English text ($\sigma = 62$, $n=4017009$)}
	\label{tb:english-ajkv}
	\centering
	{\renewcommand{\arraystretch}{1.0}\setlength{\tabcolsep}{1mm}
		\scalebox{0.75}{
			\begin{tabular}{|l|AAAAAAAAAA|}
				\hline
				\multicolumn{1}{|c|}{$m$} & \multicolumn{1}{c}{2} & \multicolumn{1}{c}{4} & \multicolumn{1}{c}{8} & \multicolumn{1}{c}{16} & \multicolumn{1}{c}{32} & \multicolumn{1}{c}{64} & \multicolumn{1}{c}{128} & \multicolumn{1}{c}{256} & \multicolumn{1}{c}{512} & \multicolumn{1}{c|}{1024} \\
				\hline
BNDM$q$  & 140+.48^{(2)}& 93+.39^{(2)}& 70+.84^{(4)}& 54+.10^{(4)}& \winner{45}{88^{(4)}}& 47+.61^{(4)}& 47+.64^{(4)}& 46+.88^{(4)}& 47+.40^{(6)}& 45+.58^{(4)} \\
SBNDM$q$ & \winner{108}{32^{(2)}}& \winner{73}{50^{(2)}}& \winner{64}{87^{(2)}}& \winner{50}{73^{(4)}}& 47+.85^{(4)}& 48+.45^{(4)}& 48+.55^{(4)}& 47+.56^{(4)}& 47+.08^{(4)}& 45+.98^{(4)} \\
KBNDM  & 192+.76& 126+.56& 92+.03& 71+.04& 64+.17& 56+.79& 49+.56& 49+.09& 50+.24& 47+.68 \\
BSDM$q$  & 117+.22^{(2)}& 79+.96^{(2)}& 70+.36^{(3)}& 57+.72^{(4)}& 49+.91^{(8)}& 48+.00^{(8)}& 44+.99^{(8)}& 43+.62^{(8)}& 43+.06^{(8)}& 42+.70^{(8)} \\
$\star$\,FJS    & 192+.52& 158+.61& 106+.40& 89+.04& 78+.43& 68+.14& 64+.80& 61+.15& 60+.48& 55+.38 \\
$\star$\,FJS$+$ & 196+.88& 155+.86& 101+.77& 77+.74& 67+.40& 60+.31& 54+.77& 52+.44& 51+.62& 47+.74 \\
HASH$q$  & 312+.91^{(2)}& 218+.95^{(2)}& 107+.38^{(2)}& 70+.85^{(2)}& 56+.90^{(3)}& 49+.33^{(6)}& 46+.63^{(5)}& 45+.27^{(8)}& 44+.77^{(3)}& 42+.98^{(7)} \\
FS-$w$   & 129+.50^{(6)}& 104+.45^{(6)}& 71+.57^{(6)}& 61+.08^{(4)}& 54+.19^{(6)}& 49+.53^{(4)}& 48+.95^{(6)}& 47+.02^{(4)}& 46+.96^{(6)}& 43+.00^{(4)} \\
IOM    & 193+.37& 148+.51& 104+.36& 86+.22& 75+.09& 69+.08& 63+.63& 60+.62& 58+.83& 51+.76 \\
WOM    & 199+.23& 153+.81& 112+.45& 86+.61& 75+.26& 62+.80& 60+.54& 62+.74& 58+.91& 55+.65 \\
SKIP$q$  & 161+.38^{(2)}& 100+.52^{(2)}& 72+.33^{(3)}& 55+.71^{(4)}& 49+.06^{(4)}& 48+.73^{(4)}& 49+.87^{(4)}& 48+.81^{(8)}& 45+.75^{(2)}& 45+.32^{(2)} \\
WFR$q$   & 137+.40^{(2)}& 85+.22^{(2)}& 68+.88^{(2)}& 52+.85^{(5)}& 46+.67^{(5)}& \winner{44}{39^{(8)}}& \winner{42}{09^{(8)}}& 41+.66^{(5)}& 41+.65^{(6)}& 40+.53^{(2)} \\
$\star$\,LWFR$q$  & 121+.08^{(2)}& 85+.47^{(2)}& 70+.38^{(2)}& 53+.83^{(3)}& 47+.89^{(5)}& 44+.49^{(5)}& 42+.38^{(6)}& \winner{41}{09^{(8)}}& \winner{41}{23^{(8)}}& \winner{40}{30^{(8)}} \\
$\star$\,DIST$q$  & 115+.61^{(2)}& 80+.14^{(2)}& 65+.84^{(3)}& 52+.25^{(4)}& 48+.13^{(4)}& 46+.26^{(4)}& 43+.32^{(4)}& 42+.84^{(8)}& 42+.98^{(4)}& 41+.47^{(7)} \\
$\star$\,LDIST$q$ & 229+.62^{(2)}& 102+.15^{(2)}& 69+.70^{(3)}& 55+.34^{(4)}& 49+.46^{(5)}& 45+.93^{(5)}& 44+.37^{(3)}& 43+.15^{(4)}& 42+.21^{(5)}& 41+.11^{(7)} \\
				\hline
			\end{tabular}
		}
	}
\end{table}

\newcommand{\nr}[1]{\scalebox{1}[1]{#1}}
\newcommand{\nrr}[1]{\scalebox{1}[1]{#1}}
\begin{table}[b!]
	\caption{Fibonacci string ($\sigma=2$, $n=2178309$)}
	\label{tb:fibonacci}
	\centering
	{\renewcommand{\arraystretch}{1.0}\setlength{\tabcolsep}{1mm}
		\scalebox{0.75}{
			\begin{tabular}{|l|AAAAAAAAAA|}
				\hline
				\multicolumn{1}{|c|}{$m$} & \multicolumn{1}{c}{2} & \multicolumn{1}{c}{4} & \multicolumn{1}{c}{8} & \multicolumn{1}{c}{16} & \multicolumn{1}{c}{32} & \multicolumn{1}{c}{64} & \multicolumn{1}{c}{128} & \multicolumn{1}{c}{256} & \multicolumn{1}{c}{512} & \multicolumn{1}{c|}{1024} \\
				\hline
BNDM$q$  & 343+.25^{(2)}& 308+.44^{(2)}& 283+.26^{(4)}& 257+.64^{(6)}& 233+.94^{(6)}& 285+.63^{(4)}& 284+.37^{(4)}& 293+.00^{(4)}& 307+.82^{(4)}& 315+.47^{(4)} \\
SBNDM$q$ & \winner{286}{02^{(2)}}& \!\winner{292}{15^{(2)}}& 272+.98^{(4)}& 276+.35^{(6)}& 306+.42^{(6)}& 372+.03^{(6)}& 432+.53^{(6)}& 493+.20^{(6)}& 546+.94^{(6)}& 602+.09^{(6)} \\
KBNDM  & 541+.70& 405+.78& 411+.08& 422+.85& 382+.25& 402+.45& 425+.60& 437+.67& 461+.07& 451+.12 \\
BSDM$q$  & 482+.47^{(2)}& 500+.43^{(3)}& 397+.29^{(5)}& 362+.52^{(8)}& 330+.76^{(6)}& 736+.89^{(1)}& 766+.57^{(1)}& 782+.98^{(1)}& 790+.26^{(1)}& 508+.80^{(3)} \\
$\star$\,FJS    & 402+.44& 362+.23& 276+.97& 237+.87& 218+.38& \winner{206}{07}& 203+.86& 202+.94& 196+.49& 194+.01 \\
$\star$\,FJS$+$ & 456+.20& 396+.04& 335+.70& 319+.93& 295+.64& 300+.36& 296+.48& 295+.37& 288+.56& 289+.00 \\
HASH$q$  & 645+.48^{(2)}& 406+.70^{(2)}& \!\winner{257}{69^{(4)}}& 251+.91^{(7)}& 279+.81^{(7)}& 344+.99^{(7)}& 415+.16^{(7)}& 470+.71^{(7)}& 514+.05^{(7)}& 579+.57^{(7)} \\
FS-$w$   & 383+.23^{(1)}& 396+.23^{(1)}& 347+.72^{(1)}& 289+.15^{(1)}& 253+.36^{(1)}& 246+.82^{(1)}& 248+.73^{(1)}& 235+.66^{(1)}& 235+.35^{(1)}& 230+.61^{(1)} \\
IOM    & 381+.92& 414+.42& 453+.54& 497+.84& 543+.93& 641+.13& 751+.42& 839+.92& 899+.19& 1019+.59 \\
WOM    & 552+.38& 555+.43& 564+.67& 617+.93& 664+.47& 732+.05& 852+.06& 926+.35& 1036+.31& 1126+.17 \\
SKIP$q$  & 470+.93^{(2)}& 394+.09^{(2)}& 332+.66^{(5)}& 336+.16^{(8)}& 374+.91^{(8)}& 464+.23^{(3)}& 460+.54^{(3)}& 450+.18^{(3)}& 451+.05^{(3)}& 464+.13^{(3)} \\
WFR$q$   & 442+.32^{(2)}& 497+.95^{(3)}& 528+.45^{(3)}& 652+.48^{(6)}& \nr{2132}+.38^{(7)}& \nr{3762}+.19^{(8)}& \nr{6762}+.67^{(8)}& \nrr{12624}+.63^{(8)}& \nrr{24416}+.65^{(8)}& \nrr{48596}+.02^{(8)} \\
$\star$\,LWFR$q$  & 552+.64^{(2)}& 504+.77^{(3)}& 428+.34^{(5)}& 342+.80^{(7)}& 297+.07^{(7)}& 304+.57^{(2)}& 274+.47^{(6)}& 265+.28^{(6)}& 258+.06^{(6)}& 254+.92^{(6)} \\
$\star$\,DIST$q$  & 438+.96^{(1)}& 359+.56^{(2)}& 293+.80^{(2)}& \!\winner{235}{61^{(2)}}& \winner{213}{07^{(7)}}& 206+.92^{(2)}& 201+.18^{(8)}& 196+.56^{(5)}& 194+.86^{(4)}& 193+.62^{(5)} \\
$\star$\,LDIST$q$ & 565+.13^{(2)}& 412+.15^{(3)}& 300+.98^{(4)}& 245+.38^{(7)}& 215+.89^{(8)}& 208+.40^{(7)}& \winner{200}{45^{(4)}}& \winner{193}{74^{(4)}}& \winner{190}{85^{(3)}}& \winner{193}{48^{(8)}} \\
				\hline
			\end{tabular}
		}
	}
\end{table}

\begin{table}[t!]
	\caption{Texts with frequent pattern occurrences ($\sigma = 4$, $n = 4000000$, $m=8$)}
	\label{tb:occ-m-8-sigma-4}
	\centering
	{\renewcommand{\arraystretch}{1.0}\setlength{\tabcolsep}{1mm}
		\scalebox{0.7}{
			\begin{tabular}{|l|AAAAAAAAAAAA|}
				\hline
				\multicolumn{1}{|c|}{$occ$} & \multicolumn{1}{c}{0} & \multicolumn{1}{c}{128} & \multicolumn{1}{c}{256} & \multicolumn{1}{c}{512} & \multicolumn{1}{c}{1024} & \multicolumn{1}{c}{2048} & \multicolumn{1}{c}{4096} & \multicolumn{1}{c}{8192} & \multicolumn{1}{c}{16384} & \multicolumn{1}{c}{32768} & \multicolumn{1}{c}{65536} & \multicolumn{1}{c|}{131072} \\
				\hline
BNDM$q$  & 73+.53^{(4)}& 72+.28^{(4)}& 72+.87^{(4)}& 73+.89^{(4)}& 74+.39^{(4)}& 75+.75^{(4)}& 77+.60^{(4)}& 81+.63^{(4)}& 82+.67^{(4)}& 88+.98^{(4)}& 108+.56^{(4)}& 146+.13^{(6)} \\
SBNDM$q$ & \winner{68}{58^{(4)}}& \winner{68}{92^{(4)}}& 71+.17^{(4)}& 77+.26^{(4)}& 81+.04^{(4)}& 80+.76^{(4)}& 78+.33^{(4)}& 82+.82^{(4)}& 91+.01^{(4)}& 96+.83^{(4)}& 111+.75^{(4)}& 140+.21^{(4)} \\
KBNDM  & 127+.73& 128+.41& 128+.41& 126+.92& 128+.74& 131+.17& 129+.74& 135+.70& 140+.99& 152+.46& 164+.59& 186+.32 \\
BSDM$q$  & 69+.85^{(4)}& 69+.04^{(4)}& \winner{68}{85^{(4)}}& \winner{69}{19^{(4)}}& \winner{70}{63^{(4)}}& \winner{72}{00^{(4)}}& \winner{72}{48^{(4)}}& \winner{76}{25^{(4)}}& \winner{79}{95^{(4)}}& 89+.91^{(4)}& 110+.22^{(4)}& 143+.91^{(4)} \\
$\star$\,FJS    & 246+.26& 253+.35& 263+.94& 247+.73& 255+.94& 276+.44& 262+.83& 256+.36& 251+.33& 269+.39& 259+.93& 251+.06 \\
$\star$\,FJS$+$ & 174+.38& 173+.86& 180+.46& 173+.05& 178+.65& 182+.98& 177+.56& 181+.17& 182+.48& 190+.71& 197+.69& 199+.60 \\
HASH$q$  & 121+.09^{(3)}& 121+.36^{(3)}& 122+.64^{(3)}& 122+.50^{(3)}& 119+.49^{(3)}& 123+.98^{(3)}& 124+.31^{(3)}& 124+.93^{(3)}& 126+.81^{(3)}& 128+.94^{(3)}& 143+.75^{(3)}& 153+.08^{(4)} \\
FS-$w$   & 154+.89^{(4)}& 155+.56^{(4)}& 165+.18^{(4)}& 156+.67^{(4)}& 159+.03^{(4)}& 166+.44^{(4)}& 165+.62^{(4)}& 160+.46^{(4)}& 167+.46^{(4)}& 178+.53^{(2)}& 185+.86^{(2)}& 191+.38^{(2)} \\
IOM    & 185+.31& 181+.07& 195+.53& 187+.98& 194+.18& 200+.99& 195+.98& 195+.89& 197+.67& 202+.61& 214+.90& 209+.52 \\
WOM    & 192+.59& 192+.28& 207+.57& 189+.21& 196+.02& 203+.86& 196+.98& 199+.79& 203+.59& 215+.79& 219+.94& 230+.59 \\
SKIP$q$  & 79+.32^{(4)}& 77+.14^{(4)}& 79+.19^{(4)}& 82+.08^{(4)}& 81+.82^{(4)}& 82+.55^{(4)}& 83+.83^{(4)}& 87+.08^{(4)}& 89+.82^{(4)}& 93+.09^{(4)}& 106+.41^{(4)}& 126+.22^{(3)} \\
WFR$q$   & 76+.68^{(4)}& 76+.38^{(4)}& 77+.63^{(4)}& 83+.21^{(4)}& 80+.93^{(4)}& 81+.42^{(4)}& 83+.86^{(4)}& 88+.55^{(4)}& 93+.16^{(4)}& 106+.07^{(4)}& 123+.90^{(4)}& 164+.77^{(4)} \\
$\star$\,LWFR$q$  & 76+.99^{(4)}& 76+.33^{(4)}& 78+.83^{(4)}& 77+.57^{(4)}& 78+.74^{(4)}& 76+.46^{(4)}& 84+.65^{(4)}& 89+.87^{(4)}& 96+.17^{(4)}& 108+.67^{(4)}& 129+.35^{(3)}& 168+.70^{(3)} \\
$\star$\,DIST$q$  & 69+.67^{(4)}& 73+.38^{(4)}& 74+.35^{(4)}& 74+.31^{(4)}& 75+.12^{(4)}& 74+.96^{(4)}& 77+.21^{(4)}& 77+.56^{(4)}& 80+.09^{(4)}& \winner{86}{25^{(4)}}& \winner{101}{12^{(4)}}& \winner{120}{98^{(3)}} \\
$\star$\,LDIST$q$ & 75+.82^{(4)}& 74+.45^{(4)}& 74+.89^{(4)}& 76+.77^{(4)}& 74+.62^{(4)}& 75+.47^{(4)}& 77+.10^{(4)}& 80+.18^{(4)}& 83+.40^{(4)}& 88+.86^{(4)}& 103+.73^{(4)}& 122+.27^{(4)} \\
				\hline
			\end{tabular}
		}
	}
\end{table}

\begin{table}[t!]
	\caption{Texts with frequent pattern occurrences ($\sigma = 95$, $n = 4000000$, $m=8$)}
	\label{tb:occ-m-8-sigma-95}
	\centering
	{\renewcommand{\arraystretch}{1.0}\setlength{\tabcolsep}{1mm}
		\scalebox{0.7}{
			\begin{tabular}{|l|AAAAAAAAAAAA|}
				\hline
				\multicolumn{1}{|c|}{$occ$} & \multicolumn{1}{c}{0} & \multicolumn{1}{c}{128} & \multicolumn{1}{c}{256} & \multicolumn{1}{c}{512} & \multicolumn{1}{c}{1024} & \multicolumn{1}{c}{2048} & \multicolumn{1}{c}{4096} & \multicolumn{1}{c}{8192} & \multicolumn{1}{c}{16384} & \multicolumn{1}{c}{32768} & \multicolumn{1}{c}{65536} & \multicolumn{1}{c|}{131072} \\
				\hline
BNDM$q$  & 56+.58^{(2)}& 55+.41^{(2)}& 56+.26^{(2)}& 56+.89^{(2)}& 56+.68^{(2)}& 57+.42^{(2)}& 58+.80^{(2)}& 60+.72^{(2)}& 67+.37^{(2)}& 74+.57^{(2)}& 98+.52^{(2)}& 125+.65^{(2)} \\
SBNDM$q$ & \winner{52}{76^{(2)}}& \winner{52}{37^{(2)}}& \winner{53}{09^{(2)}}& \winner{53}{23^{(2)}}& 55+.56^{(2)}& \winner{52}{82^{(2)}}& 56+.41^{(2)}& \winner{56}{50^{(2)}}& 61+.58^{(2)}& 69+.96^{(2)}& 91+.72^{(2)}& 112+.41^{(2)} \\
KBNDM  & 81+.98& 80+.22& 77+.61& 79+.42& 79+.55& 81+.88& 82+.61& 82+.25& 89+.19& 98+.65& 111+.59& 142+.65 \\
BSDM$q$  & 54+.89^{(2)}& 56+.07^{(2)}& 54+.55^{(2)}& 55+.99^{(2)}& 55+.78^{(2)}& 56+.78^{(2)}& 58+.38^{(2)}& 61+.48^{(2)}& 66+.76^{(2)}& 76+.66^{(3)}& 96+.79^{(3)}& 131+.76^{(3)} \\
$\star$\,FJS    & 81+.93& 81+.41& 81+.14& 81+.83& 82+.04& 82+.16& 82+.10& 84+.30& 86+.02& 90+.72& 100+.41& 111+.55 \\
$\star$\,FJS$+$ & 80+.26& 84+.27& 81+.05& 80+.58& 80+.21& 80+.84& 83+.52& 85+.66& 87+.40& 94+.03& 102+.16& 112+.23 \\
HASH$q$  & 101+.34^{(2)}& 103+.34^{(2)}& 102+.15^{(2)}& 105+.02^{(2)}& 101+.55^{(2)}& 104+.07^{(2)}& 105+.84^{(2)}& 107+.46^{(2)}& 107+.79^{(2)}& 112+.15^{(2)}& 118+.52^{(2)}& 126+.53^{(2)} \\
FS-$w$   & 52+.77^{(8)}& 52+.42^{(6)}& 53+.51^{(6)}& 53+.74^{(6)}& \winner{53}{07^{(6)}}& 53+.95^{(6)}& \winner{55}{33^{(6)}}& 57+.61^{(6)}& \winner{61}{36^{(6)}}& 68+.86^{(6)}& 81+.66^{(4)}& 102+.11^{(2)} \\
IOM    & 82+.07& 83+.86& 84+.89& 85+.58& 82+.64& 82+.45& 83+.90& 86+.98& 87+.28& 91+.63& 99+.30& 106+.37 \\
WOM    & 84+.37& 84+.42& 86+.39& 85+.05& 85+.36& 85+.35& 86+.20& 86+.74& 90+.22& 93+.31& 105+.36& 114+.68 \\
SKIP$q$  & 60+.93^{(2)}& 59+.34^{(2)}& 63+.66^{(3)}& 62+.97^{(3)}& 62+.11^{(2)}& 62+.42^{(2)}& 64+.08^{(2)}& 65+.12^{(2)}& 69+.11^{(2)}& 76+.18^{(3)}& 83+.05^{(3)}& 101+.43^{(3)} \\
WFR$q$   & 59+.39^{(2)}& 59+.57^{(2)}& 59+.24^{(2)}& 60+.12^{(2)}& 61+.16^{(2)}& 55+.51^{(2)}& 58+.24^{(2)}& 62+.38^{(2)}& 68+.35^{(2)}& 81+.99^{(2)}& 104+.92^{(2)}& 139+.15^{(2)} \\
$\star$\,LWFR$q$  & 55+.99^{(2)}& 58+.63^{(2)}& 54+.16^{(2)}& 53+.75^{(2)}& 56+.99^{(2)}& 60+.87^{(2)}& 58+.45^{(2)}& 63+.39^{(2)}& 72+.48^{(2)}& 85+.61^{(2)}& 112+.47^{(3)}& 152+.54^{(3)} \\
$\star$\,DIST$q$  & 58+.60^{(2)}& 59+.30^{(2)}& 58+.94^{(2)}& 58+.46^{(2)}& 58+.47^{(3)}& 59+.91^{(2)}& 61+.55^{(2)}& 62+.29^{(2)}& 65+.21^{(2)}& \winner{65}{99^{(2)}}& \winner{77}{36^{(2)}}& \winner{95}{05^{(2)}} \\
$\star$\,LDIST$q$ & 62+.56^{(3)}& 61+.52^{(2)}& 66+.62^{(3)}& 66+.90^{(2)}& 67+.28^{(3)}& 63+.66^{(2)}& 65+.22^{(2)}& 67+.15^{(2)}& 67+.59^{(2)}& 73+.93^{(2)}& 85+.10^{(2)}& 100+.33^{(2)} \\
				\hline
			\end{tabular}
		}
	}
\end{table}

All algorithms are implemented in C language, compiled by GCC 9.2.0 with the optimization option $\mathtt{-O3}$. 
We used the implementations in SMART~\cite{Faro2016smart} for all algorithms except for the FJS, FJS$+$ and our algorithms.
The implementations of our algorithms are available at \url{https://github.com/ushitora/distq}.
We experimented with the following strings:
\begin{enumerate}
	\item Genome sequence (Table~\ref{tb:genome}): the genome sequence of \emph{E.~coli} of length $n=4641652$ with $\sigma=4$, from NCBI\footnote{\texttt{https://www.ncbi.nlm.nih.gov/genome/167?genome\_assembly\_id=161521}}. The patterns are randomly extracted from $T$ of length $m=2,4,8,16,32,64,128,256,512 \text{\;and\;} 1024$. We measured the total running time of 25 executions.
	\item English text (Table~\ref{tb:english-ajkv}): the King James version of the Bible of length $n=4017009$ with $\sigma=62$, from the Large Canterbury Corpus\footnote{\texttt{http://corpus.canterbury.ac.nz/}}~\cite{Arnold1997canterbury}. We removed the line breaks from the text. The patterns are randomly extracted from $T$ of length $m=2,4,8,16,32,64,128,256,512$ and $1024$. We measured the total running time of 25 executions.
	\item Fibonacci string (Table~\ref{tb:fibonacci}): generated by the following recurrence
	\[
	\Fib_1=\texttt{b},\;\Fib_2=\texttt{a}\;\text{ and }\;\Fib_k =\Fib_{k-1} \cdot \Fib_{k-2} \text{ for } k > 2. 
	\]
	The text is fixed to $T=\Fib_{32}$ of length $n = 2178309$. The patterns are randomly extracted from $T$ of length $m=2,4,8,16,32,64,128,256,512 \text{\;and\;} 1024$. We measured the total running time of 100 executions.
	\item Texts with frequent pattern occurrences (Tables~\ref{tb:occ-m-8-sigma-4}, \ref{tb:occ-m-8-sigma-95}): generated by intentionally embedding a lot of patterns. We embedded $\mathit{occ} = 0,128,256,512,1024,2048,4096,8192,16384,\linebreak[1]32768,65536$, and $131072$ occurrences of a pattern of length $m=8$ into a text of length $n=4000000$ over an alphabet of size $\sigma=4\mbox{ and }95$.
	More specifically, we first randomly generate a pattern and a provisional text, which may contain the pattern. Then we randomly change characters of the text until the pattern does not occur in the text. Finally we embed the pattern $\mathit{occ}$ times at random positions without overlapping. We measured the total running time of 25 executions.
	\end{enumerate}
The best performance among three trials is recorded for each experiment.
For the algorithms using parameter $q$ or $w$, we report only the best results.
The value of $q$ or $w$ giving the best performance is shown in round brackets.

Experimental results show that when the pattern is short, the SBNDM$q$ and BSDM$q$ algorithms have good performance in general.
For the genome sequence text, WFR$q$, LWFR$q$ and our algorithms are the fastest algorithms except when the pattern is very short.
On the English text, SBNDM$q$ and LWFR$q$ run fastest for short and long patterns, respectively.
On the other hand, DIST$q$ runs almost as fast as the best algorithm on both short and long patterns.
In fact, it runs faster than SBDDM$q$ and LWFR$q$ for long and short patterns, respectively.
In the experiments on the Fibonacci string, the FJS algorithm and our algorithms have shown good results as the pattern length increases.
Differently from the previous two sorts of texts, our algorithms clearly outperformed the LWFR$q$ algorithm.
Since the Fibonacci strings have many repeating structures and patterns are randomly extracted from the text, the number of occurrences of the pattern is very large in this experiment.
Therefore, we hypothesize that the efficiency of DIST$q$ algorithms does not decrease when the number of pattern occurrences is large.
We fixed the pattern length and alphabet size and prepared data with the number of pattern occurrences intentionally changed.
From the experimental result, it is found that our algorithms become more advantageous as the number of pattern occurrences increases.
The results show that the LDIST$q$ algorithm is generally slower than the DIST$q$ algorithm.
This should be due to the overhead of the process of determining whether to update the hash value difference by the rolling hash in the LDIST$q$ algorithm.

\section{Conclusion}\label{sec:conclusion}
We proposed two new algorithms for the exact string matching problem: the DIST$q$ algorithm and the LDIST$q$ algorithm.
We confirmed that our algorithms are as efficient as the state-of-the-art algorithms in many cases.
Particularly when a pattern frequently appears in a text, our algorithms outperformed existing algorithms.
The DIST$q$ algorithm runs in $O(q(n+m))$ time and the LDIST$q$ algorithm runs in $O(n+m)$ time.
Their performances were not significantly different in our experiments and rather the former ran faster than the latter in most cases, where the optimal value of $q$ was relatively small.

\bibliographystyle{plain} 
\bibliography{ref}

\end{document}